\title{An Incentive-Compatible Smart Contract for Decentralized Commerce}
\author {
	Nikolaj I. Schwartzbach\thanks{Supported by the ERC Advanced Grant MPCPRO.} \\
	\small Department of Computer Science, Aarhus University
}
\newtheorem{definition}{Definition}
\newtheorem{lemma}{Lemma}
\newtheorem{theorem}{Theorem}
\newcommand{\gear}[5]{%
	\foreach \i in {1,...,#1} {%
		[rotate=(\i-1)*360/#1]  (0:#2)  arc (0:#4:#2) {[rounded corners=1.5pt]
			-- (#4+#5:#3)  arc (#4+#5:360/#1-#5:#3)} --  (360/#1:#2)
}}  
\def\foldedpaper#1{
	\tikz[scale=#1,line width={#1*1pt}]{
		\def\a{1.41} 
		\def\b{0.2}  
		\def\c{0.1}  
		\def\d{0.05} 
		\def\N{6}    
		\draw         (0,0)
		--  ++(-1,0)
		--  ++(0,\a)
		--  ++(1-\b,0)
		--  ++(\b,-\b)
		-- cycle;
		\foreach \lnum in {1,...,\N}{
			\pgfmathsetmacro\yline{\a-\d-\lnum*\a/(\N+1)}
			\draw (-1+\c,\yline) -- (-\c,\yline);
		}
		\draw[fill=white] (0,\a-\b) -- ++(-\b,0) -- ++ (0,\b);
	}
}
\newcommand{\bubbleright}[2]{
	\tikz[remember picture,baseline]{\node[anchor=base,inner sep=0,outer sep=0]%
		(a) {#1};\node[overlay,cloud callout,callout relative pointer={(-0.2cm,-0.2cm)},%
		aspect=1.5,draw] at ($(a.north)+(0.8cm,0.4cm)$) {#2};}%
}%
\newcommand{\bubbleleft}[2]{
	\tikz[remember picture,baseline]{\node[anchor=base,inner sep=0,outer sep=0]%
		(a) {#1};\node[overlay,cloud callout,callout relative pointer={(0.2cm,-0.2cm)},%
		aspect=1.5,draw] at ($(a.north)-(0.8cm,-0.4cm)$) {#2};}%
}%
\renewcommand{\P}{\textsf{P}}
\providecommand{\keywords}[1]
{
	{
	\textbf{\textit{Keywords---}} #1}
}
\begin{document}

	\maketitle
	
	\begin{abstract}
		We propose a smart contract that allows two mutually distrusting parties to transact any non-digital good or service by deploying a smart contract on a blockchain to act as escrow. The contract settles disputes by letting parties wager that they can convince an arbiter that they were the honest party. We analyse the contract as an extensive-form game and prove that the honest strategy is secure in a strong game-theoretic sense if and only if the arbiter is biased in favor of honest parties. By relaxing the security notion, we can replace the arbiter by a random coin toss. Finally, we show how to generalize the contract to multiparty transactions in a way that amortizes the transaction fees.
	\end{abstract}

\vspace{-0.5cm}
	\keywords{smart contract / game theory / electronic commerce / blockchain / distributed systems}
	
	\section{Introduction}
	A fundamental problem of electronic commerce is ensuring both ends of the trade are upheld: an honest seller should always receive payment, and an honest buyer should only pay if the seller was honest. Traditionally, this is ensured by introducing a trusted intermediary who holds the payment in escrow until the trade has completed, after which the funds are released to the seller. The parties are typically required not to be anonymous, so as to enable either party to hold the other party accountable in case of fraudulent behavior, and potentially subject to legal repercussions. This, in conjunction with reputation systems, has proved to be an effective means to ensure honest and efficient trading, as evidenced by the enormous market cap of online marketplaces such as Amazon or Alibaba. However, this crucially relies on being able to trust the intermediary to behave honestly: while the intermediary has strong incentive to maintain a good reputation, from a cryptographic point of view this does not address the fundamental issue. A central marketplace still has incentive to engage in monopolistic behavior, such as removal of competitors' products, or differential pricing based on customer demographics, to the extent that it remains undetected. 
	
	Recent years has seen the creation of \emph{darknet markets} that take advantage of cryptocurrency and mix networks to provide decentralized and somewhat anonymous trade of goods and services. They arguably solve some issues with central marketplaces, but in doing so, also enable black market/criminal activity to remain relatively unchecked. The most infamous darknet market was ``Silk Road'', known for selling illegal goods such as drugs and weapons. It operated from February 2011, until it was seized by authorities in October 2013, and the developer, Ross Ulbricht, was sentencted to double life imprisonment. But this is a rarity: due to the anonymous nature of the markets, it is often difficult to prosecute individuals, and many convictions of buyers are based on some sort of circumstantial metadata, such as recently having used their credit card to purchase cryptocurrency for a similar amount. However, most darknet markets remain inherently centralized, in that all data as well as escrowed funds are processed directly by the market itself, essentially at its mercy. Buyers and sellers are required to trust both the benevolence and competence of the market, a trust which is at best misplaced and at worst disastrous in consequence. Indeed, there are numerous examples of prominent darknet markets being hacked and all funds held in escrow stolen, or the operators of the market themselves perform an \emph{exit scam}, i.e. suddenly stealing the funds in escrow and subsequently closing the market. It is often difficult, if not impossible, to recover the stolen funds and hold anyone accountable.
	
	In this paper, we consider a seller who wants to sell an item to a buyer without having to rely on a trusted third party. We assume both parties have access to a blockchain that allows them to deploy smart contracts that can exchange cryptocurrency. Our goal is to replace the trusted third party with a smart contract, such that parties can be trusted to complete their end of the trade. Specifically, we want to design a smart contract for which we can prove that parties have an economic incentive to behave honestly. Or in other words, 
	\emph{can we design a smart contract to facilitate decentralized trading of non-digital goods and services in a way that provably ensures honest behavior in rational agents?}
	
	\subsection{Our results}
	We propose a smart contract for escrow of funds that enables any two parties to engage in the trade of a physical good or service in exchange for cryptocurrency. The contract relies crucially on an \emph{arbiter} that is invoked only in the case of a dispute. The purpose of the arbiter is to distinguish the honest party from the dishonest party. The basic idea is that either party can issue a dispute by making a ``wager'' of size $\lambda$ that they will win the arbitration: the winner is repaid their deposit as well as the funds held in escrow. We prove that both buyer and seller are incentivized to behave honestly if and only if the arbiter is not too biased against honest parties. Specifically, let $\gamma$ be the ``error rate'' of the arbiter: then we show there is a value of $\lambda$ such that the contract has strong game-theoretic security if and only if $\gamma < \frac{1}{2}$. By instead considering a weaker notion of security, we can use a random coin flip as arbitration. We sketch a simple construction based on Blum's coin toss protocol.
	
	The contract can be run on any blockchain that supports smart contracts (such as Ethereum). As a result, many properties (anonymity, efficiency etc.) of our smart contract are inherited by the corresponding blockchain. We feature a discussion of different ways to instantiate the smart contract. In particular, the contract can be used in a manner that complies with current laws and regulations by using a blockchain with revocable anonymity: a party who participates in distributing illicit goods can be deanonymized by the courts, while all other parties remain anonymous. 
	
	\subsection{Related work} 
	A variety of solutions have been proposed for replacing the trusted third party by a smart contract in so-called \emph{atomic swaps}. Most academic work has focused on \emph{digital goods}, the delivery of which can be deterministically determined. 
	
	Dziembowski, Eckey and Faust propose a protocol, called \emph{FairSwap} \cite{fairswap}, with essentially optimal security: the goods are delivered to the buyer if and only if the seller receives the money. Their solution relies heavily on cryptography and assumes the goods can be represented as a finite field element. As a result, their protocol does not generalize in any meaningful way to physical goods. It seems unlikely we can achieve this notion of security for non-digital goods, due to a fundmantal discrepancy between the physical and the digital world.
	
	Asgaonkar and Krishnamachari propose a smart contract for the trade of digital goods \cite{buyer_seller_dilemma}: both parties make a deposit of funds \emph{a priori} (a \emph{dual-deposit}) which is only refunded if the trade was successful. They prove that the honest strategy is the unique subgame perfect equilibrium for sufficiently large deposits. Like FairSwap, their solution only works for digital goods. 
	
	Witkowski, Seuken and Parkes consider the setting of escrow in online auctions \cite{incentive_compatible_escrow}. Their idea is to pay some of the buyers a rebate to offset their expected loss from engaging in a transaction with the seller. Whether a buyer is paid a rebate depends on the reports of other buyers. They prove that the seller has strict incentive to be honest, while the buyers are only weakly incentivized to do so. They show that strict incentives for the buyers is possible if the escrow has distributional knowledge about the variations in seller abilities, based on a \emph{peer prediction} method. Unfortunately, their solutions rely on a somewhat idealized setting in which there are many buyers concurrently transacting with the same seller, as otherwise buyers and/or sellers may have an incentive to collude, thus breaking security. In addition, it is not obvious how to apply their work to a non-auction type setting.
	
	Outside academic circles, there are several proposed solutions, of which the most promising is OpenBazaar. It is also blockchain based and as such provides some level of decentralization. However, its dispute resolution remains centralized in a sense, since all moderation is done by a set of trusted moderators, requiring buyers and sellers to blindly trust the benevolence and competence of these moderators. In addition, it does not have any formal analysis of correctness or security, and thus falls short in rigorously solving the buyer and seller's dilemma. To the best knowledge of the author, there is no ``truly decentralized'' market at the time of writing.
	
	\section{The basic contract}
	
	In this section we describe our contract for trade of non-digital goods and services. We consider a buyer $B$ who wants to purchase an item $\mathit{it}$ from a seller $S$. The item can be a physical good (a book), or a service (roof repair). The item is sold for a price of $x$, and has a ``perceived value'' to the buyer of $y > x$, while the seller perceives the value at $x' < x$\footnote{From a game-theoretic point of view, we have to assume $y>x>x'$, as otherwise neither buyer nor seller has no incentive to engage in the transaction.} The item $\mathit{it}$ is \emph{non-digital} which means it has to be shipped through a physical channel ``off-chain''. See \cref{fig:contract} for an illustration. Unfortunately, there is no computer program that can rigorously determine whether or not $\mathit{it}$ was physically delivered to the buyer because of a fundamental gap between the digital and the physical world. We assume both parties have access to a blockchain, which for our purposes is a shared data structure that allows both parties to deploy a smart contract $\pi$, that can maintain state, respond to queries, and transfer funds. Unlike a human third party, the smart contract can be guaranteed to behave honestly due to security of the underlying blockchain. For simplicity, we assume the blockchain is secure and incorruptible, and consider only attacks on the contract itself. In addition, we assume transaction fees are negligible compared to the items being transacted, such that they can be disregarded entirely.
	
	\begin{figure}
		\centering
		\begin{tikzpicture}
		\node[label={below:{$\mathit{it}$}}] (it) at (-2.8,0) {$\foldedpaper{.3}$};
		\node[draw,circle] (S) at (-2,0) {$S$};
		\draw \gear{10}{0.28}{0.36}{8}{2};
		\node {$\pi$};
		\node[draw,circle] (B) at (2,0) {$B$};
		\node[draw,circle] (A) at (0,-1.25) {$\mathcal{A}$};
		\node at (1,0.2) {$x$};
		\node at (-1,0.2) {$x$};
		\node at (0,1.5) {$it$};
		\draw[->] (B) -- (0.36,0);
		\draw[->] (-0.36,0) -- (S);
		\draw[->, dashed] (S) to[bend left=60] (B);
		\draw[dotted] (S) to (A);
		\draw[dotted] (B) to (A);
		\draw[dotted] (0,-.36) to (A);
		\node at (2.25,0.65) {$\bubbleright{~}{$\foldedpaper{.2} = y$}$};
		\node at (-2.3,.675) {$\bubbleleft{~}{$\foldedpaper{.2} = x'$}$};
		\end{tikzpicture}
		\caption{\small A seller $S$ and a buyer $B$ engaged in the transaction of the non-digital good $\mathit{it}$, using a smart contract $\pi$ and arbiter $\mathcal{A}$. The item is sold for $x$ funds and has a perceived value of $y>x$ to $B$. The money $x$ is transferred from $B$ to $S$ through the contract $\pi$. The dashed line is a unidirectional ``off-chain'' channel, through which $\mathit{it}$ can be sent. The dotted lines indicate that $\mathcal{A}$ is only invoked in case of disputes.}
		\label{fig:contract}
	\end{figure}
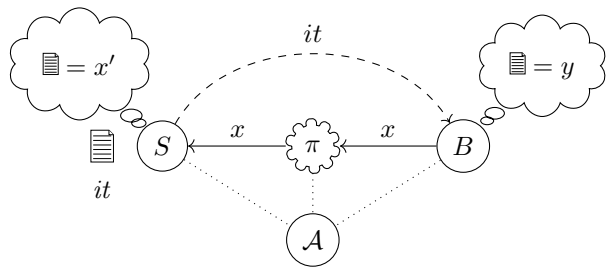
	
	The contract is parameterized by an \emph{arbiter} $\mathcal{A}$, which is a protocol invoked in case of disputes: its purpose is to distinguish the honest party from the dishonest party. We denote by $\gamma$ the \emph{(perceived) upper bound} on the error rate of the arbiter by any of the two parties. This means that each party has reason to think the error rate of the arbiter is $\leq \gamma$, for some constant $0\leq \gamma < 1$ that they provide as input to the contract\footnote{In practice, $B$ and $S$ may have different upper bounds, say $\gamma_B, \gamma_S$. In this case we can let $\gamma = \max(\gamma_B, \gamma_S)$.}. In addition, the contract is parameterized by a \emph{wager constant} $\lambda>0$. In a nutshell the contract proceeds as follows: both parties sign a contract committing to making the trade, and $B$ places $x$ money in escrow. $S$ then delivers $\mathit{it}$ to $B$ ``off-chain'', and notifies the smart contract to transfer the funds in escrow to $S$ and terminate the contract. If $S$ does not deliver $\mathit{it}$ to $B$, then $B$ can trigger a dispute by placing a ``wager'' of size $\lambda$ that they can convince the arbiter that they were the honest party. If $S$ does not respond (or forfeits), it is assumed $\mathit{it}$ was not delivered to $B$, and the contract refunds $x+\lambda$ funds to $B$. However, a dishonest buyer may trigger the dispute phase even when they actually received $\mathit{it}$. In this case, the honest $S$ may counter the wager by also placing a wager of size $\lambda$ that they will win the arbitration. At this point, both parties submit an evidence string and invoke the arbiter. The winner is repaid $x+\lambda$, while the loser receives nothing (the left over $\lambda$ are used to compensate the arbiter for their time). We handle crashing by having timeouts in the contract, in a way that favors the party that did not crash; a buyer that crashes is assumed to have received $\mathit{it}$. Likewise, a seller who fails to respond to a dispute is assumed to forfeit.
	
	\section{Game-theoretic analysis}
	
	We now turn to instantiate the basic smart contract as to achieve security in a game-theoretic sense. Unlike the standard cryptographic model, where parties can be partitioned into honest and dishonest party, instead we assume all parties are \emph{rational}, meaning they seek to maximize their own utility with no concern for the intended behavior of the protocol designer: a rational party will at each point in the execution of the protocol choose the action that maximizes their expected utility. We say the protocol is secure when the maximal utility is achieved when a party behaves honestly. In the following we assume familiarity with extensive form games and refer to \cite{Osborne1994} for details. 
	
	We consider an $n$-party protocol $\pi$ where each party $\P_i$ has a set $\mathcal{S}_i$ of possible strategies, of which there is a unique \emph{honest strategy} $s_i^* \in \mathcal{S}_i$. We define $\mathcal{S} = \mathcal{S}_1 \times \cdots \times \mathcal{S}_n$ as the set of strategy profiles, and let $s^* = (s_1^*, \ldots, s_n^*) \in \mathcal{S}$ be the unique honest strategy profile. We say $\pi$ is secure if $s^*$ is guaranteed to maximize the utility of all parties, robust to unilateral deviations by any single party. Formally, we require that $s^*$ is the (unique) subgame perfect equilibrium: at each subgame, every party receives a strictly smaller utility by deviating from $s^*$. This is likely not sufficient in itself: if the incentive to choose $s^*$ is too small, then there may be other reasons to deviate not captured by the utilities of the game. If the difference is sufficiently large (say $>\varepsilon$) then we say the protocol is secure in game-theoretic sense against $\varepsilon$-deviating rational adversaries. 
	
	\begin{definition}[Strong security]
		Let $\pi$ be a protocol with strategy space $\mathcal{S}$, where $s^* \in \mathcal{S}$ is the unique honest strategy profile, and let $\varepsilon>0$ be a fixed constant. We say $\pi$ enjoys \emph{$\varepsilon$-strong game-theoretic security} if the following is satisfied:
		\begin{itemize}
			\item \emph{(Completeness)} - $s^*$ is the unique subgame perfect equilibrium.
			\item \emph{(Soundness)} - There is no $s \neq s^*$ and $\varepsilon' < \varepsilon$, such that $s$ is a subgame perfect $\varepsilon'$-equilibrium.
		\end{itemize}
	\end{definition}
	
	\noindent
	\emph{A note on $\varepsilon$-soundness.}
	While our definition of $\varepsilon$-soundness may seem strange at first, it is just an artifact of the definition of subgame perfect equilibria. In order to lessen the notation burden, we want $\varepsilon$-soundness to mean that $s\neq s^*$ results in $\geq\varepsilon$ less utility than $s^*$, rather than $>\varepsilon$ less utility (which is the case when $\varepsilon' = \varepsilon$), as this results in fewer constants in our results. 
	
	\begin{lemma}
		If $\pi$ has $\varepsilon$-strong game-theoretic security then $s^*$ is an evolutionary stable strategy.
	\end{lemma}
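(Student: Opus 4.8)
The plan is to reduce the claim to the standard fact that a strict Nash equilibrium is an evolutionarily stable strategy, which turns out to require only the completeness half of strong security. First I would recall the definition: $s^*$ is an \emph{evolutionarily stable strategy} if for every alternative $s \neq s^*$, either $u(s^*, s^*) > u(s, s^*)$ (strict best response), or else $u(s^*, s^*) = u(s, s^*)$ together with $u(s^*, s) > u(s, s)$ (stability against neutral mutants). Equivalently, a population playing $s^*$ cannot be invaded by any sufficiently small fraction of mutants playing some $s$.

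Next I would show that completeness forces $s^*$ to be a \emph{strict} Nash equilibrium. By completeness, $s^*$ is the unique subgame perfect equilibrium, and at every subgame each party's utility \emph{strictly} decreases under any deviation. Specializing to the root subgame and to unilateral deviations, this says precisely that for every $s \neq s^*$ differing from $s^*$ in a single player's strategy we have $u(s^*) > u(s)$, i.e. $u(s^*, s^*) > u(s, s^*)$ in the symmetrized notation. Soundness in fact upgrades this to a uniform gap of at least $\varepsilon$, which I would remark on but need not invoke.

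The strict inequality $u(s^*, s^*) > u(s, s^*)$ for all $s \neq s^*$ is exactly the first ESS condition, so the second case is never triggered and $s^*$ is immediately an ESS. Thus the bulk of the argument is simply unpacking the completeness clause into the strict-Nash form and then quoting the elementary implication ``strict Nash $\Rightarrow$ ESS''.

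The main obstacle is a modeling issue rather than a computational one: the contract is an \emph{asymmetric} extensive-form game, with the buyer and seller occupying distinct roles, whereas ESS is classically formulated for symmetric games. I would handle this by passing to the symmetrized game in which nature first assigns each player the role of buyer or seller uniformly at random, so that a ``strategy'' prescribes behavior in both roles. The strictness of the equilibrium is preserved under this symmetrization—averaging strictly dominated deviations over the two role assignments leaves a strict loss—so the reduction to ``strict Nash $\Rightarrow$ ESS'' carries through unchanged, and the $\varepsilon$-gap from soundness guarantees the invasion barrier is bounded away from zero.
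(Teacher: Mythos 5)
Your reduction breaks at its central step: the claim that completeness makes $s^*$ a \emph{strict} Nash equilibrium of the game is false, and in fact cannot be true for this contract. Completeness is a statement about strict losses from deviations \emph{within each subgame} (equivalently, strict one-shot deviation inequalities under backward induction). It does not yield strictness in the normal form of the whole game, because a deviation that changes behavior only at subgames \emph{off the equilibrium path} leaves the payoff against $s^*$ completely unchanged. Concretely: under honest play the path is ``send'' followed by ``accept'', so the dispute subgames are never reached. The seller strategy $s$ that sends but would \emph{forfeit} a dispute (rather than counter it) satisfies $u(s, s^*) = u(s^*, s^*) = x - x'$, with $s \neq s^*$. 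Hence $s^*$ is not a strict Nash equilibrium — indeed no profile of an extensive-form game with unreached nodes can be — so the elementary implication ``strict Nash $\Rightarrow$ ESS'' is unavailable, and the second ESS condition is not vacuous: these off-path ``neutral mutants'' are exactly the alternatives it must handle.

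This is also precisely why soundness cannot be dismissed as a remark. The paper's own (terse) proof uses completeness only to conclude that $s^*$ is an equilibrium (the weak first condition of the Thomas/Maynard Smith definition), and then invokes soundness to discharge Maynard Smith's second condition, i.e.\ to rule out that a mutant $s \neq s^*$ with $u(s, s^*) = u(s^*, s^*)$ can persist: soundness guarantees that any such $s$ fails to be a subgame perfect $\varepsilon'$-equilibrium for every $\varepsilon' < \varepsilon$, which is what penalizes the mutant somewhere in the tree. Your symmetrization of the buyer/seller roles is a sensible way to address the asymmetry that the paper glosses over, but it does not repair this gap — after symmetrizing, the same off-path deviations still give equality against $s^*$, so the argument must go through the second ESS condition (and hence through soundness), not around it.
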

	\begin{proof}
		We use the definition of ESS by Thomas (\cite{ess}): first, $s^*$ is clearly an equilibrium by completeness. In addition, Maynard Smith's second condition is implied by soundness.
	\end{proof}
	
	\noindent To analyze the contract from a game-theoretic perspective, we consider the contract as an extensive-form game and draw the corresponding game tree (seen in \cref{fig:gametree}). The payoff for each party is defined as their expected change in funds, where we have explicitly omitted transaction fees for simplicity. As an example, consider a dispute between a dishonest buyer and an honest seller. The buyer wins the arbitration with probability $\gamma$, earning $y$ value at no cost. They may also lose the arbitration with probability $1-\gamma$, in which case the buyer loses $x+\lambda$, for an expected payoff of $y\gamma - (x+\lambda)(1-\gamma)$. Likewise, the seller receives their payment of $x$ with probability $1-\gamma$ and loses $x+\lambda$ with probability $\gamma$, for an expected payoff of $x\,(1-\gamma) - (x+\lambda)\,\gamma-x'$. The other cases are similar and can be seen in \cref{fig:gametree}.

	\begin{lemma} \label{lemma:spe}
		There is a value of $\lambda$ such that the contract is complete if and only if the arbiter is biased in favor of honest parties.
	\end{lemma}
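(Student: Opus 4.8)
The plan is to read off the extensive-form game of \cref{fig:gametree} and solve it by backward induction, recording at each decision node the inequality on the wager $\lambda$ (in terms of $\gamma, x, x', y$) under which the honest action strictly dominates every deviation. Since completeness means exactly that $s^*$ is the unique subgame-perfect equilibrium, the whole statement reduces to showing that these node-by-node inequalities admit a common solution $\lambda > 0$ if and only if $\gamma < \frac{1}{2}$, where $\gamma < \frac{1}{2}$ is precisely the statement that the arbiter declares the honest party the winner more than half the time, i.e.\ is biased in favor of honest parties.

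First I would dispatch the leaves: the arbitration nodes are chance nodes whose expected payoffs (the honest party wins with probability $1-\gamma$) are already tabulated in the figure. Climbing the tree, two decisions turn out to be decisive. The first is the seller's choice, having delivered, whether to \emph{contest} a dispute or forfeit: contesting is worthwhile only when the stake is not too large, which yields an upper bound of the form $\lambda < \frac{(1-\gamma)x}{\gamma}$. The second is the buyer's choice, having received the item, whether to \emph{dispute} or accept; deterring such a wrongful dispute forces the stake to be large enough, giving a lower bound $\lambda > \frac{\gamma x}{1-\gamma}$ (the same threshold, read from the other side, is what makes a non-delivering seller forfeit rather than contest). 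I would then check that the only remaining decision---the seller's initial choice to deliver---is governed by the assumption $x > x'$ and, granted the lower bound above, imposes no further restriction, and that every dominance just invoked is \emph{strict}, so that the profile surviving backward induction is unique.

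The crux is then the elementary observation that the admissible interval $\frac{\gamma x}{1-\gamma} < \lambda < \frac{(1-\gamma)x}{\gamma}$ is non-empty if and only if $\frac{\gamma}{1-\gamma} < \frac{1-\gamma}{\gamma}$, i.e.\ $\gamma^2 < (1-\gamma)^2$, i.e.\ $\gamma < \frac{1}{2}$. For the forward direction I would fix any $\lambda$ strictly inside this interval and verify that all dominance inequalities hold strictly, so $s^*$ is the unique subgame-perfect equilibrium and the contract is complete. For the converse, when $\gamma \geq \frac{1}{2}$ the interval collapses, so for \emph{every} choice of $\lambda$ at least one of the two decisive nodes rewards a deviation: either the honest seller strictly prefers to forfeit a wrongful dispute (so a cheating buyer profits by disputing), or the buyer strictly prefers to dispute after receiving the item---in either case $s^*$ fails to be the unique subgame-perfect equilibrium, and completeness fails.

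I expect the main obstacle to be the bookkeeping rather than any single clever step: I must confirm that these two nodes are genuinely the binding ones, and in particular that the seller's root delivery decision against a \emph{deviating} buyer does not secretly impose a tighter bound that would push the feasibility threshold below $\frac{1}{2}$, and that strictness holds at every node, so that completeness delivers the uniqueness of the equilibrium and not merely the existence of an honest one.
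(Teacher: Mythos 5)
Your overall strategy matches the paper's exactly: backwards induction on the game tree of \cref{fig:gametree}, an upper bound on $\lambda$ from the honest seller's incentive to counter, a lower bound from deterring deviations, and the observation that the interval $x\,\frac{\gamma}{1-\gamma} < \lambda < x\,\frac{1-\gamma}{\gamma}$ is non-empty precisely when $\gamma < \frac{1}{2}$. The interval you arrive at is also the correct one.

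However, you misattribute the source of the lower bound, and this breaks your converse as literally stated. The buyer's decision to accept rather than wrongfully dispute (anticipating that the honest seller will counter) requires $y - x > y\gamma - (x+\lambda)(1-\gamma)$, which solves to $\lambda > x\,\frac{\gamma}{1-\gamma} - y$ --- slack by $y$ relative to the threshold you claim, and in fact implied by the other constraints (this is exactly the paper's remark that its third inequality is implied by its second). The binding lower bound $\lambda > x\,\frac{\gamma}{1-\gamma}$ comes \emph{only} from the node you relegate to a parenthesis: the non-delivering seller must strictly prefer to forfeit rather than counter, i.e.\ $0 > x\gamma - \lambda\,(1-\gamma)$. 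Consequently your converse disjunction --- that for $\gamma \geq \frac{1}{2}$ every $\lambda$ makes either the honest seller forfeit or the buyer wrongfully dispute --- does not cover all cases. Take $\gamma = \frac{1}{2}$ and $\lambda = \frac{x}{2}$: the honest seller strictly prefers to counter (since $\frac{x}{2} < x$), and the buyer strictly prefers to accept (since $y > \frac{x}{2}$), so neither of your two named deviations is profitable; yet completeness fails, because the non-delivering seller strictly prefers to counter, as $x\gamma - \lambda\,(1-\gamma) = \frac{x}{4} > 0$, so the honest action at that node is not even an equilibrium action. The fix is immediate --- phrase both decisive constraints in terms of the two seller nodes (counter when honest, forfeit when dishonest), as the paper does, and let the buyer's node be the slack one --- but as written your case analysis in the converse is incorrect.
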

	\begin{proof}
		We proceed using backwards induction in the game tree. We see that the honest actions yield a strictly larger payoff if and only if the following inequalities are satisfied:
		\begin{align}
		x\,(1-\gamma) - \lambda\gamma - x' &> -x' \label{eq1}\\
		0 &> x\gamma - \lambda\,(1-\gamma) \label{eq2}\\
		y-x &> y\gamma - (x+\lambda)(1-\gamma) \label{eq3}
		\end{align}
		\cref{eq1} says that an honest seller will counter disputes from dishonest buyers. \cref{eq2} says that a dishonest seller forfeits a dispute from an honest buyer. \cref{eq3} says that a buyer will not dispute when they received $\mathit{it}$. In addition, we need $0 > -x$ and $y-x>0$ but these come from the problem statement. From \cref{eq1} we get $\lambda < x\,(\frac{1-\gamma}{\gamma})$, while \cref{eq2} yields $\lambda > x\,(\frac{\gamma}{1-\gamma})$. From \cref{eq3} we get $\lambda > x\,(\frac{\gamma}{1-\gamma}) - y$ but this is implied by \cref{eq2}. In summary, any value of $\lambda$ that achieves completeness must satisfy:
		$$x\left(\frac{\gamma}{1-\gamma}\right)<\lambda < x\left(\frac{1-\gamma}{\gamma}\right)$$
		But this is only be true when $\gamma < \frac{1}{2}$. 
	\end{proof}

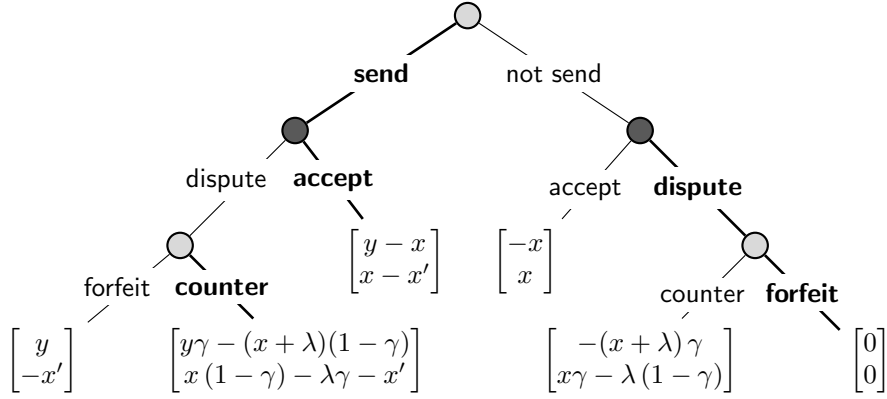
\begin{figure}
	\centering
	\resizebox{\columnwidth}{!}{%
		\begin{tikzpicture}
		\begin{scope}[every node/.style={circle,thick,draw}]
		\node[fill=gray!30!white] (A) at (3.75,0) {}; 
		\node[fill=black!30!gray] (B) at (1.5,-1.5) {}; 
		\node[fill=gray!30!white] (E) at (0,-3) {}; 
		\node[fill=black!30!gray] (H) at (6,-1.5) {}; 
		\node[fill=gray!30!white] (J) at (7.5,-3) {}; 
		\end{scope}
		\node (D) at (2.8,-3.2) {$\begin{bmatrix}y-x\\x-x'\end{bmatrix}$};
		\node (F) at (-1.8,-4.5) {$\begin{bmatrix}y\\-x'\end{bmatrix}$};
		\node (G) at (1.5,-4.5) {$\begin{bmatrix} y\gamma-(x+\lambda)(1-\gamma)\\x\,(1-\gamma) - \lambda\gamma-x'\end{bmatrix}$};
		\node (I) at (4.5,-3.2) {$\begin{bmatrix}-x\\x\end{bmatrix}$};
		\node (L) at (6,-4.5) {$\begin{bmatrix}-(x+\lambda)\,\gamma\\x\gamma-\lambda\,(1-\gamma)\end{bmatrix}$};
		\node (K) at (9,-4.5) {$\begin{bmatrix}0\\0\end{bmatrix}$};
		
		\begin{scope}[>={Stealth[black]},
		every node/.style={fill=white,rectangle}]
		\path [-] (A) edge [line width=.35mm] node {\textbf{\textsf{send}}} (B);
		\path [-] (A) edge node {\sf not send}  (H);
		\path [-] (B) edge node[xshift=-0.15cm,yshift=0.1cm] {\sf dispute}  (E);
		\path [-] (B) edge [line width=.35mm]  node {\textbf{\textsf{accept}}} (D);
		\path [-] (E) edge  node[xshift=-0.15cm,yshift=0.025cm] {\sf forfeit} (F);
		\path [-] (E) edge[line width=.35mm]  node {\textbf{\textsf{counter}}} (G);
		\path [-] (H) edge node[xshift=-0.15cm,yshift=-0.125cm] {\sf accept} (I);
		\path [-] (H) edge[line width=.35mm] node {\textbf{\textsf{dispute}}} (J);
		\path [-] (J) edge [line width=.35mm] node {\textbf{\textsf{forfeit}}} (K);
		\path [-] (J) edge node[xshift=-0.15cm,yshift=-0.1cm]{\sf counter} (L);
		\end{scope}
		\end{tikzpicture}
	}
	\caption{\small  Game tree of the smart contract after both parties have accepted the transaction. The first coordinate is the buyer payoff and the second is seller payoff. Light nodes are seller actions; dark nodes are buyer actions. The heavy edges denote the honest actions.}
	\label{fig:gametree}
\end{figure}
	
	\begin{lemma}
		Let $\varepsilon>0$ and suppose $y-\varepsilon\geq x \geq\varepsilon$. Then there is a value of $\lambda$ such that the contract is $\varepsilon$-sound if and only if $\gamma < \frac{1}{2}$ and $\varepsilon \leq x\,(1-2\gamma)$.
	\end{lemma}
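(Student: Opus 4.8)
The plan is to reuse the backward-induction computation from the proof of \cref{lemma:spe}, but now to demand that each honest action beat its alternatives by a strict additive margin of at least $\varepsilon$ rather than merely by a positive amount. First I would justify that this local-margin condition is exactly what $\varepsilon$-soundness asks for. By the note following the definition, $\varepsilon$-soundness means every $s\neq s^*$ yields some player at least $\varepsilon$ less utility than $s^*$ in a suitable subgame. To reduce this to a node-by-node statement, I would take any $s\neq s^*$ and pick a decision node $v$ at which $s$ prescribes a non-honest action but agrees with $s^*$ at all strict descendants of $v$ (such a node exists since the game tree is finite). In the subgame rooted at $v$ the continuation is honest, so reverting to the honest action at $v$ changes the owner's payoff by exactly the honest-versus-deviation gap computed under honest continuation; if every such gap is $\geq\varepsilon$, then $s$ fails to be a subgame-perfect $\varepsilon'$-equilibrium for any $\varepsilon'<\varepsilon$, and conversely a gap $<\varepsilon$ at some node exhibits an $\varepsilon'$-equilibrium $\neq s^*$. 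Hence $\varepsilon$-soundness is equivalent to all these gaps being $\geq\varepsilon$.

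Next I would write the $\varepsilon$-strengthened versions of the three inequalities from \cref{lemma:spe}. Requiring \eqref{eq1} with margin $\varepsilon$ gives $x(1-\gamma)-\lambda\gamma\geq\varepsilon$, i.e.\ $\lambda\leq\frac{x(1-\gamma)-\varepsilon}{\gamma}$; requiring \eqref{eq2} with margin $\varepsilon$ gives $\lambda(1-\gamma)-x\gamma\geq\varepsilon$, i.e.\ $\lambda\geq\frac{x\gamma+\varepsilon}{1-\gamma}$; and the margin version of \eqref{eq3} yields $\lambda\geq\frac{x\gamma+\varepsilon}{1-\gamma}-y$, which is weaker than the bound from \eqref{eq2} since $y>0$ and is therefore implied. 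The remaining node margins reproduce the $\varepsilon$-strengthenings of the problem-statement inequalities, namely $x\geq\varepsilon$ (the buyer's gain $x$ from disputing a non-delivery at node $H$ rather than accepting the loss) and $y-x\geq\varepsilon$ (the buyer's surplus from completing an honest trade rather than declining), both of which are exactly the hypothesis $y-\varepsilon\geq x\geq\varepsilon$.

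It then remains to decide when a feasible $\lambda$ exists, which is precisely when the lower bound does not exceed the upper bound: $\frac{x\gamma+\varepsilon}{1-\gamma}\leq\frac{x(1-\gamma)-\varepsilon}{\gamma}$. Since $0\leq\gamma<1$, both denominators are positive, so I would cross-multiply and collect terms; using $(1-\gamma)^2-\gamma^2=1-2\gamma$, the $\lambda$-free feasibility condition collapses to $\varepsilon\leq x(1-2\gamma)$. Because $\varepsilon>0$, this can hold only if $1-2\gamma>0$, i.e.\ $\gamma<\frac{1}{2}$; conversely, when $\gamma<\frac{1}{2}$ and $\varepsilon\leq x(1-2\gamma)$ the interval for $\lambda$ is non-empty, and any $\lambda$ in it makes the contract $\varepsilon$-sound. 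This establishes the claimed equivalence.

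I expect the main obstacle to be the reduction in the first paragraph—translating the global, $\varepsilon'$-equilibrium formulation of soundness into node-local margin conditions—since the algebra itself (the difference-of-squares simplification) is routine. A secondary point to handle cleanly is the boundary case $\gamma=0$, where the upper bound $\frac{x(1-\gamma)-\varepsilon}{\gamma}$ is vacuous (no upper constraint on $\lambda$) and the feasibility condition $\varepsilon\leq x$ coincides with the hypothesis, so the statement remains consistent.
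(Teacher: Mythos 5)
Your proof is correct and takes essentially the same route as the paper's: the same three $\varepsilon$-margin inequalities, the same resulting bounds $\frac{x\gamma+\varepsilon}{1-\gamma}\leq\lambda\leq\frac{x\,(1-\gamma)-\varepsilon}{\gamma}$, and the same collapse of the feasibility condition to $\varepsilon\leq x\,(1-2\gamma)$ with $\gamma<\frac{1}{2}$ forced by $\varepsilon>0$. Your opening reduction of $\varepsilon$-soundness to node-local margins (via a deepest-deviation argument) and your treatment of the $\gamma=0$ boundary are more careful than the paper, which asserts the reduction in one sentence and leaves the cross-multiplication implicit, but the substance is identical.
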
 
	\begin{proof}
		In any dishonest strategy profile one of the parties must choose a dishonest action. If we can show all honest actions have $\geq\varepsilon$ more utility than the dishonest actions, then the contract is $\varepsilon$-sound. But this is true when the following is satisfied:
		\begin{align}
		x\,(1-\gamma) - \lambda\gamma - x' - \varepsilon &\geq -x' \label{eq2_1}\\
		-\varepsilon &\geq x\gamma - \lambda\,(1-\gamma) \label{eq2_2}\\
		y-x - \varepsilon &\geq y\gamma - (x+\lambda)(1-\gamma) \label{eq2_3}
		\end{align}
		From \cref{eq2_1} we must have $\lambda \leq \frac{x\,(1-\gamma) - \varepsilon}{\gamma}$. From \cref{eq2_2} we get $\lambda \geq \frac{x \gamma + \varepsilon}{1-\gamma}$, and \cref{eq2_3} can be seen to be implied by \cref{eq2_2}. We also need $-\varepsilon \geq -x$ and $y-x \geq \varepsilon$ but these are given in the problem statement. In summary, any values of $\lambda$, $\varepsilon$ must satisfy:
		$$
		\frac{x \gamma + \varepsilon}{1-\gamma} \leq \lambda \leq \frac{x\,(1-\gamma) - \varepsilon}{\gamma}
		$$
		Again, we must have $\gamma < \frac{1}{2}$ since $\varepsilon>0$, while the latter condition can be established by solving for $\varepsilon$. 
	\end{proof}
	
	\begin{theorem}
		The contract has $x\,(1-2\gamma)$-strong game-theoretic security whenever $\gamma < \frac{1}{2}$ and $\lambda = x$.
	\end{theorem}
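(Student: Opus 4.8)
The plan is to read off $x\,(1-2\gamma)$-strong game-theoretic security directly as the conjunction of its two defining halves---completeness and $x\,(1-2\gamma)$-soundness---and to obtain each half from the two preceding lemmas by plugging in $\lambda = x$ and $\varepsilon = x\,(1-2\gamma)$. The entire argument then reduces to checking that these two values land inside the feasible intervals already computed, and in fact occupy their canonical positions.

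For completeness I would invoke \cref{lemma:spe}, which established that completeness holds for any $\lambda$ with $x\left(\frac{\gamma}{1-\gamma}\right) < \lambda < x\left(\frac{1-\gamma}{\gamma}\right)$. The clean observation is that $\lambda = x$ is precisely the geometric mean of the two endpoints, since their product is $x^2$. As $\gamma < \frac{1}{2}$ forces $\frac{\gamma}{1-\gamma} < 1 < \frac{1-\gamma}{\gamma}$, the choice $\lambda = x$ lies strictly between the endpoints, so the contract is complete.

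For soundness I would invoke the soundness lemma, whose feasible region is $\frac{x\gamma + \varepsilon}{1-\gamma} \leq \lambda \leq \frac{x\,(1-\gamma) - \varepsilon}{\gamma}$. Substituting $\varepsilon = x\,(1-2\gamma)$ makes both endpoints collapse to exactly $x$: the left numerator becomes $x\gamma + x(1-2\gamma) = x(1-\gamma)$ and the right numerator becomes $x(1-\gamma) - x(1-2\gamma) = x\gamma$. Thus the interval degenerates to the single point $\{x\}$, so $\lambda = x$ satisfies it with equality on both sides, and $x\,(1-2\gamma)$ is exactly the largest admissible soundness margin. Before concluding I would discharge the standing hypotheses of that lemma for this $\varepsilon$: the bound $x \geq \varepsilon$ is immediate from $\gamma \geq 0$, while $y - \varepsilon \geq x$ amounts to $y \geq 2x(1-\gamma)$, which I would flag as the regime in which the full margin is realized.

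I do not anticipate a real obstacle, since the statement is essentially a corollary of the two lemmas; the one thing to get right is the arithmetic verifying that $\lambda = x$ simultaneously sits at the center of the completeness interval and at the boundary of the soundness interval when $\varepsilon = x\,(1-2\gamma)$. The conceptual takeaway worth highlighting is that $\lambda = x$ is the uniquely balanced choice of wager---large enough to deter a dishonest buyer from disputing yet small enough that an honest seller will always counter---so it maximizes the soundness margin while keeping the honest profile the unique subgame perfect equilibrium.
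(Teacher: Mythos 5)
Your proposal is correct and takes essentially the same route as the paper: completeness by placing $\lambda=x$ strictly inside the interval from \cref{lemma:spe}, and soundness by substituting $\varepsilon = x\,(1-2\gamma)$ into the soundness lemma, where the feasible interval collapses to the single point $\lambda = x$. If anything you are more careful than the paper, which only verifies the lower bound $\lambda \geq \frac{x\gamma+\varepsilon}{1-\gamma} = x$ and silently omits the soundness lemma's standing hypothesis $y-\varepsilon\geq x$ (i.e.\ $y \geq 2x\,(1-\gamma)$), which you correctly flag as an additional condition not implied by $y>x$.
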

	\begin{proof}
		Since $\gamma< \frac{1}{2}$ the conditions for completeness are satisfied for $\lambda=x$. For soundness, let $\varepsilon = x\,(1-2\gamma) $. We choose a value of $\lambda$ that satisfies the lower bound:
		\begin{align*}
		\lambda \geq \frac{x\gamma + \varepsilon}{1-\gamma}
		= \frac{x\gamma + x\,(1-2\gamma)}{1-\gamma} = x & \qedhere
		\end{align*}
	\end{proof}
	
	
	\section{Wager functions}
	In this section we consider a generalization of the previous protocol that allows us to obtain various tradeoffs between security and wager size. Each parties submits a wager such that if a party loses the arbitration, they will lose $\ell$, while a winner is paid $\omega$, for some functions $\omega,\ell$. We will assume that $-\omega > \ell$ such that winning is preferred over losing. The upper bound of $\gamma < \frac{1}{2}$ seems inherent, and we show that it cannot be beat, even by arbitrary $\omega,\ell$.
	
	\begin{lemma}
		For arbitrary rebate functions $\omega,\ell$ the contract is only complete if the arbiter is biased in favor of honest parties.
	\end{lemma}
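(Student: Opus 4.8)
The plan is to re-run the backward-induction argument of \cref{lemma:spe}, but to isolate only those completeness conditions that never refer to the concrete shape of the wager, so that they survive verbatim for arbitrary $\omega,\ell$. First I would introduce the expected arbitration payoff of a party who counters and then wins with probability $p$, namely $V(p) := p\,\omega + (1-p)\,\ell$. Since $V$ is affine in $p$ and a winner fares strictly better than a loser ($\omega>\ell$, i.e.\ winning is preferred to losing), $V$ is strictly increasing. The honest party wins arbitration with probability $1-\gamma$, whereas a dishonest party who counters wins only with probability $\gamma$, so the two regimes amount to evaluating the \emph{same} function $V$ at the two points $1-\gamma$ and $\gamma$.

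Next I would read off the two seller decisions from \cref{fig:gametree} that pin down $\gamma$. At node $E$ the honest seller has already shipped, so both forfeiting and countering carry the sunk term $-x'$; completeness requires that countering be strictly better, which cancels the $-x'$ and reduces to $V(1-\gamma) > 0$. At node $J$ the dishonest seller has kept the item, so forfeiting yields $0$ while countering yields $V(\gamma)$; completeness requires that forfeiting be strictly better, i.e.\ $V(\gamma) < 0$. The crucial point is that both inequalities are \emph{necessary} for completeness (the honest action is strictly optimal at every subgame), and that they concern the same party facing the identical wager function, hence the identical $V$.

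Combining the two necessary conditions gives $V(\gamma) < 0 < V(1-\gamma)$, so in particular $V(\gamma) < V(1-\gamma)$; by strict monotonicity of $V$ this forces $\gamma < 1-\gamma$, that is $\gamma < \tfrac12$. As $\gamma < \tfrac12$ is precisely the statement that the arbiter is biased in favor of honest parties, the lemma follows, and crucially the concrete form of $\omega$ and $\ell$ was never used.

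I expect the only real obstacle to be the bookkeeping in the second step: one must check that, after passing to general $\omega,\ell$, the forfeit and counter branches at $E$ and at $J$ genuinely share the same baseline (so the comparisons collapse to $V(1-\gamma)>0$ and $V(\gamma)<0$ with no leftover terms), and that the escrow $x$ and the buyer's value $y$ do not leak into the seller's conditions. The monotonicity step is then routine once ``winning is preferred to losing'' is invoked; all the subtlety lies in abstracting the arbitration payoff correctly so that the argument depends on nothing but the single increasing function $V$ evaluated at $\gamma$ and $1-\gamma$.
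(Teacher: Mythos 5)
Your proof is correct and is essentially the paper's own argument: the paper likewise isolates the seller's counter-versus-forfeit decision, requires that an honest seller counters ($\omega\,(1-\gamma) - \ell\gamma > 0$) while a dishonest seller forfeits ($\omega\gamma - \ell\,(1-\gamma) < 0$), and concludes $\gamma < \tfrac{1}{2}$ from the assumption that winning is preferred to losing. Your function $V$ and its strict monotonicity are just a clean repackaging of the paper's direct comparison of those two expected payoffs (and your sign convention for $\ell$, with ``winner payoff exceeds loser payoff,'' matches the paper's evident intent despite its stated condition $-\omega > \ell$).
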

	\begin{proof}
		Consider a seller who has to decide whether or not to counter a dispute from the buyer. Regardless of whether the seller is honest or not, they have to decide whether to forfeit or counter. If the seller is honest we want them to counter the dispute, i.e. $\omega\,(1- \gamma) - \ell\gamma > 0$. If the seller is dishonest we want them to forfeit, i.e. $\omega\gamma - \ell\,(1-\gamma) < 0$. That is,
		$$
		\omega\gamma - \ell\,(1-\gamma) < \omega\,(1- \gamma) - \ell\gamma.
		$$
		Since we have $-\omega > \ell$ this can only be true for $\gamma < \frac{1}{2}$.
	\end{proof}
	
	\subsection{Winner rebates}
	One natural choice of wager function is to pay the winner back a rebate. The winner, in addition to winning back their wager, also wins the loser's wager. It turns out this allows us to get $\varepsilon$-strong security for arbitrary choice of $\varepsilon>0$.
	
	\begin{theorem}
		With a winner rebate of size $\lambda$, the contract has $\varepsilon$-strong security whenever $\gamma < \frac{1}{2}$ and $\lambda = \frac{x\gamma + \varepsilon}{1-2\gamma}$.
	\end{theorem}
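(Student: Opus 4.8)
The plan is to mirror the backward-induction arguments of \cref{lemma:spe} and the two subsequent lemmas, but with the arbitration payoffs recomputed for the winner rebate. The only structural change from the basic contract is that the winner of an arbitration now collects the loser's wager rather than forfeiting it to the arbiter; hence at every arbitration node the winning party's payoff increases by exactly $\lambda$ while the loser's payoff is unchanged. Applying this shift to nodes $G$ and $L$ of \cref{fig:gametree}, I would record the four updated arbitration payoffs: the honest seller who counters obtains $x(1-\gamma) - x' + \lambda(1-2\gamma)$, the dishonest seller who counters obtains $x\gamma - \lambda(1-2\gamma)$, the dishonest buyer who disputes obtains $y\gamma - x(1-\gamma) - \lambda(1-2\gamma)$, and the honest buyer who disputes obtains $-x\gamma + \lambda(1-2\gamma)$.

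Next I would read off the incentive constraints by backward induction exactly as before. The decisive observation is that the constraint which previously upper-bounded $\lambda$, namely the honest seller's incentive to counter, now reads $x(1-\gamma) + \lambda(1-2\gamma) > 0$, which holds automatically for every $\lambda > 0$ once $\gamma < \frac{1}{2}$. The same phenomenon makes the honest buyer's incentive to accept, $y(1-\gamma) - x\gamma + \lambda(1-2\gamma) > 0$, hold for free. Thus the only arbitration constraint that genuinely binds is the dishonest seller's incentive to forfeit, $0 > x\gamma - \lambda(1-2\gamma)$, i.e. $\lambda > \frac{x\gamma}{1-2\gamma}$; this establishes completeness.

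For $\varepsilon$-soundness I would require each honest action to beat its dishonest alternative by at least $\varepsilon$, which turns the binding constraint into $\lambda(1-2\gamma) - x\gamma \geq \varepsilon$, equivalently $\lambda \geq \frac{x\gamma + \varepsilon}{1-2\gamma}$. Substituting the prescribed $\lambda = \frac{x\gamma + \varepsilon}{1-2\gamma}$ makes $\lambda(1-2\gamma) = x\gamma + \varepsilon$, so this constraint is met with equality (the dishonest seller's counter yields exactly $-\varepsilon$), while the remaining margins carry slack: the honest seller's counter beats forfeiting by $x + \varepsilon$ and the honest buyer's acceptance beats disputing by $y(1-\gamma) + \varepsilon$. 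Since $\varepsilon > 0$, the completeness inequality $\lambda > \frac{x\gamma}{1-2\gamma}$ is strict, so $s^*$ remains the unique subgame perfect equilibrium.

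The main work is the payoff recomputation: one must verify that shifting each winner's payoff by $\lambda$ cancels the $\gamma$-dependence that produced the old upper bound $\lambda < x\left(\frac{1-\gamma}{\gamma}\right)$, since this cancellation is precisely what lets $\varepsilon$ grow without bound. I would also revisit the non-arbitration nodes (the buyer's choice to dispute a non-delivery, and the seller's choice to send), which are untouched by the rebate and contribute the same mild side conditions as in the earlier soundness lemma, here $\varepsilon \leq x$ and $\varepsilon \leq y - x$; these should be flagged so that the statement's ``arbitrary $\varepsilon$'' is understood to mean arbitrary relative to the arbitration mechanism, up to these external bounds.
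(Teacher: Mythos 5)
Your proposal is correct and takes essentially the same route as the paper's proof: backward induction over the same game tree with the winner-rebate payoffs, where the decisive constraint is the dishonest seller's incentive to forfeit, $\lambda(1-2\gamma) \geq x\gamma + \varepsilon$, which is exactly the paper's inequality $-\varepsilon \geq (x+\lambda)\,\gamma - \lambda\,(1-\gamma)$, met with equality at $\lambda = \frac{x\gamma+\varepsilon}{1-2\gamma}$. If anything you are more careful than the paper, which asserts that \emph{either} seller inequality yields this lower bound (the honest seller's constraint in fact carries slack $x+\varepsilon$, as you note) and which silently omits the side conditions on $\varepsilon$ from the non-arbitration nodes that its earlier soundness lemma had to assume ($y - \varepsilon \geq x \geq \varepsilon$); your only slip is attributing the bound $\varepsilon \leq y-x$ to the seller's send decision, whose margin is actually $x - x'$.
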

	\begin{proof}
		We proceed using backwards induction. Again, we only need to consider the case of a seller facing a dispute as this implies the other cases. In particular, $\varepsilon$-strong security is achieved when the following inequalities are satisfied:
		\begin{align}
		(x+\lambda)\,(1-\gamma) - \lambda \gamma -\varepsilon &\geq 0 \label{eq4_1}\\
		-\varepsilon &\geq (x+\lambda)\,\gamma - \lambda\,(1-\gamma) \label{eq4_2}
		\end{align}
		From either of these equations we get $\lambda \geq \frac{x\gamma + \varepsilon}{1-2\gamma}$ for any $\varepsilon>0$, showing the claim.
	\end{proof}
	\noindent The downside to this is that it results in larger wagers: suppose we let $\varepsilon = x\,(1-2\gamma)$, the maximum value in the old contract, then the new wager is:
	$$\lambda = \frac{x\gamma + x\,(1-2\gamma)}{1-2\gamma} = x + \frac{x\gamma}{1-2\gamma} > x$$
	which is always larger than the old wager. This is natural in a sense: since we expect to win back the wager by disputing, the wager needs to be larger to offset the increased incentive to issue a false dispute.
	
	\subsection{Withholding wagers}
	We now consider what happens the wager is withheld even for the winning party. This will allow us to get $\lambda = \frac{1}{2}x$, however the lower bound of $\lambda=\Omega(x)$ seems inherent for this construction.
	
	\begin{theorem}
		When the contract withholds all wagers, it has $\frac{1}{2}x\,(1-2\gamma)$-strong game-theoretic security when $\gamma<\frac{1}{2}$, and $\lambda=\frac{1}{2}x$.
	\end{theorem}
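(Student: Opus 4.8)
The plan is to reuse the structure of the preceding winner-rebate theorem almost verbatim, changing only the payoff bookkeeping to reflect the withholding rule. By backwards induction it suffices to analyze the two subgames in which a seller faces a dispute (nodes $E$ and $J$ of \cref{fig:gametree}): exactly as in the winner-rebate argument, once the seller's counter/forfeit decisions are fixed, the buyer's dispute decision and the root seller's send decision are pinned down with room to spare, so the two seller inequalities imply the remaining ones.

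The one substantive difference is how a wager is settled. Under withholding, the party that wins the arbitration collects the escrow $x$ but does \emph{not} recover its own wager, so winning nets $x-\lambda$ rather than the $x+\lambda$ of the winner-rebate contract; losing nets $-\lambda$ and forfeiting nets $0$. Hence a seller who counters earns an expected $(x-\lambda)(1-\gamma)-\lambda\gamma$ when honest (winning with probability $1-\gamma$) and $(x-\lambda)\gamma-\lambda(1-\gamma)$ when dishonest. Requiring the honest seller to prefer countering and the dishonest seller to prefer forfeiting, each by a margin of $\varepsilon$, yields
\begin{align}
(x-\lambda)(1-\gamma) - \lambda\gamma - \varepsilon &\geq 0 \\
-\varepsilon &\geq (x-\lambda)\gamma - \lambda(1-\gamma).
\end{align}
Simplifying the bracketed winner/loser terms collapses these to the window $x\gamma+\varepsilon \leq \lambda \leq x(1-\gamma)-\varepsilon$, which is nonempty precisely when $\varepsilon \leq \tfrac{1}{2} x(1-2\gamma)$. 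Setting $\varepsilon=\tfrac{1}{2} x(1-2\gamma)$ forces both endpoints to equal $\tfrac{1}{2} x$, so $\lambda=\tfrac{1}{2} x$ is the unique admissible wager; completeness is immediate since $\varepsilon>0$ makes every inequality strict.

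Conceptually, the feature worth flagging is that---unlike the winner-rebate case, where the honest-counter inequality was slack and only a lower bound on $\lambda$ survived---withholding makes winning worth only $x-\lambda$, so an overly large wager destroys the honest seller's incentive to defend a genuine dispute. This produces a genuine \emph{two-sided} window and is what pins $\lambda$ to a single value rather than merely bounding it below; it is also the reason $\varepsilon$ cannot exceed $\tfrac{1}{2} x(1-2\gamma)$.

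The main thing to verify carefully is the reduction at the forfeit nodes, where the withholding rule is ambiguous about the buyer's wager. I would check that an honest buyer still strictly prefers to dispute a non-delivery whether or not the contract returns the buyer's wager on a seller forfeit; under either convention disputing beats accepting by about $\tfrac{1}{2} x$, comfortably exceeding $\varepsilon$, so this constraint is not binding and the two seller inequalities above remain the operative ones. I would likewise confirm the surrounding claim that $\lambda=\Omega(x)$ is inherent here, which follows from the lower bound $\lambda\geq x\gamma+\varepsilon$ but is not required for the theorem itself.
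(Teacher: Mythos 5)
Your proposal is correct and follows exactly the approach the paper intends: the paper's own proof is just the one-line ``Follows immediately using backwards induction,'' and your derivation---winning nets $x-\lambda$ under withholding, giving the two seller inequalities that collapse to the window $x\gamma+\varepsilon \leq \lambda \leq x(1-\gamma)-\varepsilon$, nonempty precisely when $\varepsilon \leq \tfrac{1}{2}x(1-2\gamma)$ with both endpoints meeting at $\lambda=\tfrac{1}{2}x$---is precisely the computation the paper leaves implicit. Your observation that withholding produces a genuinely two-sided constraint on $\lambda$ (unlike the winner-rebate case, where only a lower bound survives) is a correct and worthwhile addition that the paper does not spell out.
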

	\begin{proof}
		Follows immediately using backwards induction.
	\end{proof}

	\section{Coin toss arbitration}
	
	In this section we consider the special case in which the output of the arbiter is independent of the evidence being submitted, i.e. $\gamma=\frac{1}{2}$. The advantage of this is that we can implement such an arbiter using a cryptographic protocol. However, we showed that strong game-theoretic security is only possible when $\gamma<\frac{1}{2}$ so we need to relax our security definition.
	
	\begin{definition}[Weak security]
		Let $\pi$ be a protocol with strategy space $\mathcal{S}$ where $s^* \in \mathcal{S}$ is the unique honest strategy profile. We say $\pi$ enjoys \emph{weak game-theoretic security} if $s^*$ is a subgame perfect equilibrium.
	\end{definition}
	
	\noindent While this guarantees that being honest is an equilibrium strategy it does not provide a strict incentive to do so. 
	
	\begin{theorem}
		Using a coin toss arbiter, the contract has weak game-theoretic security for $\gamma=\frac{1}{2}$ and $\lambda=x$.
	\end{theorem}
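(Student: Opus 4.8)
The plan is to verify directly that the honest profile $s^*$ is a subgame perfect equilibrium at $\gamma = \tfrac{1}{2}$ and $\lambda = x$, since weak security demands exactly this. As in \cref{lemma:spe}, I would proceed by backwards induction through the game tree of \cref{fig:gametree}, checking at each decision node that the prescribed honest action is a \emph{weakly} best response given honest continuation below it. The only structural change from the completeness analysis is that the equilibrium conditions are now the non-strict ($\geq$) analogues of \cref{eq1,eq2,eq3}: a subgame perfect equilibrium allows an honest action merely to tie the best deviation rather than strictly dominate it, so equalities are permitted.

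Concretely, the binding conditions at the seller's counter/forfeit nodes and the buyer's dispute node become
\begin{align*}
x\,(1-\gamma) - \lambda\gamma &\geq 0, \\
0 &\geq x\gamma - \lambda\,(1-\gamma), \\
y-x &\geq y\gamma - (x+\lambda)(1-\gamma),
\end{align*}
together with the two root conditions $x - x' \geq 0$ (the seller prefers sending over forcing a dispute it will forfeit) and $x \geq 0$ (the buyer prefers disputing a non-delivery over accepting). Substituting $\gamma = \tfrac12$ and $\lambda = x$ collapses the first two lines to the identity $0 \geq 0$, while the third reduces to $y-x \geq \tfrac{y}{2} - x$, i.e.\ $\tfrac{y}{2} \geq 0$, which holds strictly; the root conditions hold by $y > x > x'$. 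Hence the honest action is a best response at every node and $s^*$ is a subgame perfect equilibrium.

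The one point worth emphasising — and the only place any care is needed — is the role of the equalities. At $\gamma = \tfrac12$ the two conditions governing the seller's forfeit/counter decisions hold with equality, so the seller is exactly indifferent between the honest and dishonest action: a coin-toss arbiter leaves the disputing parties with identical expected payoffs. This is precisely why the strict inequalities of \cref{lemma:spe}, which force $\gamma < \tfrac12$, fail here, so we obtain only weak and not strong (or even complete) security. I expect no computational obstacle: the entire content is the observation that indifference is compatible with a subgame perfect equilibrium even though it eliminates the strict incentive that completeness requires.
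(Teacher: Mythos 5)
Your proposal is correct and follows essentially the same route as the paper: both verify, via backwards induction, that the non-strict analogues of the completeness inequalities hold, note that the buyer-side condition is implied/trivial, and observe that at $\gamma=\tfrac12$, $\lambda=x$ the seller's two dispute conditions hold with equality, which is exactly why only weak security is obtained. The only (immaterial) difference in direction is that the paper solves the combined inequality $x\,(1-\gamma)-\lambda\gamma \geq x\gamma - \lambda\,(1-\gamma)$ to conclude $\lambda=x$ is the value that works at $\gamma=\tfrac12$, whereas you substitute the given values and check; your explicit verification of the root conditions is a welcome bit of extra care.
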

	\begin{proof}
		For $s^*$ to be a subgame perfect equilibrium, there must be no $s \neq s^*$ that achieves a strictly larger payoff. As before, this is achieved when the following inequalities are satisfied:
		\begin{align}
		x\,(1-\gamma) - \lambda\gamma - x' &\geq -x' \label{eq3_1}\\
		0 &\geq x\gamma - \lambda\,(1-\gamma)  \label{eq3_2}\\
		y-x &\geq y\gamma - (x+\lambda)(1-\gamma) \label{eq3_3}
		\end{align}
		Again, \cref{eq3_3} is implied by \cref{eq3_2}. For the remaining inequalities we need,
		$$
		x\,(1-\gamma) - \lambda\gamma \geq x\gamma - \lambda\,(1-\gamma)
		$$
		which solves to $\lambda=x$ for $\gamma=\frac{1}{2}$.
	\end{proof}

	\subsection{Construction}
	We can implement the coin toss arbiter using a variant of Blum's coin flipping protocol \cite{blum_cointoss,ivan_commitments}. Suppose we have a commitment scheme, and let \textsf{commit} be the commitment function. Then the arbitration proceeds as follows:
	\begin{enumerate}
		\item $S$ samples a random bit $b_S \in_R \{0,1\}$, and a random string $r \in_R \{0,1\}^\kappa$. 
		
		\item $S$ computes $C \gets \textsf{commit}(b,r)$ and submits $C$ to the smart contract.
		
		\item $B$ samples a random bit $b_B \in_R \{0,1\}$ and submits it to the smart contract.
		
		\item $S$ submits $b_S$ and $r$ to the blockchain.
		
		\item The smart contract verifies that $b_S, r$ is a valid opening of $C$: if not, let $b := 0$. Otherwise let $b := b_S \oplus b_B$. 
		
		\item The smart contract transfers $x+\lambda$ to $S$ if and only if $b=1$, and transfers $x+\lambda$ to $B$ otherwise.
	\end{enumerate}
	
	\noindent If at some point either party times out, it is assumed they forfeited, and the funds held in escrow are released to the other party. Analysis of the protocol is straight forward: the output is clearly uniform, and security reduces to that of the commitment scheme. 
	
	From a cryptographic perspective, this protocol is not satisfactory as it does not satisfy \emph{fairness}: $S$ can choose not to complete step 4 and simply abort the protocol without revealing the output to $B$. However, this is not an issue for our application, since $S$ loses the dispute by doing so.

	\section{Jury based arbitration}
	
	In this section we consider alternatives to the coin toss arbiter based on jury systems, with the hope of obtaining $\gamma < \frac{1}{2}$. The naive solution is to simply sample a random jury, give them the evidence, and decide the outcome by majority decision. This is essentially how jury duty works. Unlike jury duty however, here jurors have no incentive to actually spend time assessing the evidence: the optimal strategy for a juror is to simply flip a coin to decide the outcome, essentially reducing the arbiter to the coin toss protocol.
	
	However, there are promising candidates for jury based arbiters, an example of which is Kleros \cite{kleros}. Kleros is a decentralized arbiter built on top of Ethereum. It is essentially a generalization of the jury system described above with a number of additional features, such as a built-in appeal system, and arborescence in the form of subcourts. Here, jurors can opt-in on a case-by-case basis, and are incentivized to answer honestly by the use of penalties/rewards: a juror who votes in accordance with the majority decision is rewarded, and otherwise they are penalized. This is justified by the use of a \emph{focal point} (or Schelling point), defined as the strategy people choose by default in absence of communication. As an example, consider the coordination game shown in \cref{fig:coordination-game}: two parties are shown four squares, one of which is distinguished. If the parties choose the same square they are rewarded, otherwise they receive nothing. Technically speaking, any individual square is an equilibrium strategy. In practice however, most people would expect the other party to choose the distinguished square; hence they will do so themselves. Thus choosing the distinguished square is a focal point. The underlying assumption of Kleros is that other jurors \emph{expect} others to vote honestly, and hence do so as well. Unfortunately, there is no empirical study of the error rate of Kleros, so whether the focal point of Kleros is actually ``truth'' remains conjecture at this point.

	\begin{figure}
		\centering
		\begin{tikzpicture}
		\draw[draw=black] (0,0) rectangle ++(0.6,0.6);
		\draw[draw=black] (1,0) rectangle ++(0.6,0.6);
		\draw[fill=black] (2,0) rectangle ++(0.6,0.6);
		\draw[draw=black] (3,0) rectangle ++(0.6,0.6);
		\end{tikzpicture}
		\caption{\small A coordination game where two parties are tasked with choosing the same square. While choosing any square is technically an equilibrium strategy, the focal point is choosing the black square, as most people will default to choosing this in absence of communication.}
		\label{fig:coordination-game}
	\end{figure}
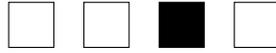
	
	\section{Choice of blockchain}
	We did not consider any specific blockchain in the previous sections: our contract works for any blockchain with the capability to execute smart contracts. As a result, the contract inherits many properties of the underlying blockchain, which means the contract can be instantiated in a variety of ways. In this section we consider various instantiations of the contract in different blockchains.
	
	\paragraph{Pseudonymity} Instantiating the contract on a public ledger such as Ethereum is the most straight-forward solution. Of course, this means that all transactions are public and available to other parties. However, for some applications this can be considered a feature: having access to the transaction history of a seller gives an indication of how likely they are to cheat and holds the parties somewhat responsible for their actions. In fact, Dellarocas \cite{reputation_systems} shows that under the right conditions, a long-lived seller has strong incentive to behave honestly when faced with many short-term buyers. The incentive is strongest in the initial phase where the seller has to work hard to build up a good reputation, and diminishes as their reputation increases.
	
	\paragraph{Full anonymity} It is possible to use the contract to facilitate fully anonymous trading by using a blockchain with built-in anonymity (Monero, Zcash, etc.). Doing so necessitates the use of the coin toss arbiter, as it is impossible to remain anonymous when submitting an evidence string containing personal information. However, this makes it impossible to enforce regulation on the goods being transacted, and it is likely such a market would be used for criminal activity.
	
	\paragraph{Revocable anonymity} We can also make the contract comply with all laws and regulations by using a blockchain that supports revocable anonymity \cite{revocable_anonymity}. To register in such a blockchain you need to identify yourself using your passport or similar document. Parties can then create new accounts that are anonymous. It can then be proven in zero knowledge that a given buyer is eligible to purchase the goods in question, for instance by enforcing age restrictions, or ensuring certain goods can only be purchased by those with suitable license, such as hunting rifles. In addition, the accounts can be \emph{deanonymized} under suitable conditions, say if illegal behavior is suspected. This requires an agreement between several organizations, e.g the police, the courts and possibly other organizations and should serve as an incentive not to engage in criminal activity.
	
	\section{Transaction fees}
	
	Our analysis assumes transaction fees are negligible, which is not the case in practice. In this section we consider adding transaction fees to our model. Doing so in general is tricky business and is very specific to the implementation and blockchain of choice. Instead, we adopt a simplified approach where playing a move in the game tree has a unit cost of $\tau$ for some $\tau > 0$, the only exception being the default action in case of timeouts: a player can always time out to choose the default action at zero cost.
	
	\begin{lemma}
		With transaction fees of size $\tau$, the contract is complete if and only if the arbiter is biased in favor of honest parties, the transaction fee is bounded $\tau < x\,(1-\gamma)-\lambda\gamma$, and the item is of sufficient value, $x-x' > \tau$.
	\end{lemma}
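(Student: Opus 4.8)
The plan is to treat the fee-augmented contract as the same extensive-form game of \cref{fig:gametree}, but with payoffs shifted by $-\tau$ at every non-default move, and then redo the backward induction of \cref{lemma:spe}. The first step is pure bookkeeping: at each decision node, identify the \emph{default} action selected on timeout (which is free) and charge $\tau$ to every other action. Following the crash conventions (a silent buyer is deemed to have received $\mathit{it}$, a silent seller is deemed to forfeit), the free actions are \textsf{not send} at the root, \textsf{accept} at both buyer nodes, and \textsf{forfeit} at both seller dispute nodes; the honest actions \textsf{send}, \textsf{counter}, and \textsf{dispute} therefore each cost $\tau$, whereas the honest \textsf{accept} and \textsf{forfeit} are free. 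I would recompute the six leaf payoffs accordingly, e.g. the honest leaf becomes $[\,y-x,\ x-x'-\tau\,]$ since the seller pays $\tau$ to send, and the refund leaf where a non-delivering seller forfeits a legitimate dispute becomes $[\,-\tau,\ 0\,]$ since the buyer pays $\tau$ to dispute but is refunded $x+\lambda$.

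Next I would write the strict-preference inequality at each node, exactly as in \cref{lemma:spe} but with the relevant $\tau$ inserted. The honest seller's \textsf{counter} vs.\ \textsf{forfeit} comparison becomes $x(1-\gamma)-\lambda\gamma-\tau>0$, which is precisely the stated bound $\tau<x(1-\gamma)-\lambda\gamma$. The seller's \textsf{send} vs.\ \textsf{not send} comparison at the root becomes $x-x'-\tau>0$, i.e.\ $x-x'>\tau$. These two nodes supply the second and third conditions directly, and their necessity is immediate since completeness requires strict optimality of the honest action at every subgame.

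For the condition $\gamma<\tfrac12$ I would combine the seller's two dispute nodes, mirroring the earlier argument for arbitrary wager functions: completeness forces the honest seller to counter, $x(1-\gamma)-\lambda\gamma-\tau>0$, and the dishonest seller to forfeit, $x\gamma-\lambda(1-\gamma)-\tau<0$. Adding these two inequalities makes the $\tau$ terms cancel and leaves $(x+\lambda)(1-2\gamma)>0$, forcing $\gamma<\tfrac12$. This shows the bias condition is necessary and, pleasingly, that the fee does not move the threshold.

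It remains to verify sufficiency, i.e.\ that the three conditions make \emph{every} honest action strictly optimal, including at the buyer's two nodes and the dishonest seller's forfeit node. As in \cref{lemma:spe} these are the implied inequalities: subtracting $\tau$ only weakens the buyer's incentive to launch a false dispute and the dishonest seller's incentive to counter, so the fee-free versions of \cref{eq2,eq3} (which hold for $\lambda$ in the admissible window of \cref{lemma:spe}) continue to hold once $\tau>0$; and the buyer's \textsf{dispute} vs.\ \textsf{accept} comparison at the non-delivery node reduces to $x>\tau$, which follows from the second condition since $x(1-\gamma)-\lambda\gamma\le x$. The main obstacle is the bookkeeping of the first step, namely pinning down the default action at each node and charging $\tau$ consistently, together with checking that the inherited lower bound on $\lambda$ keeps the dishonest-seller forfeit inequality valid after $\tau$ is introduced; everything downstream is the same routine manipulation as in \cref{lemma:spe}.
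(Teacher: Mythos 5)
Your proposal is correct and takes essentially the same route as the paper: backwards induction on the fee-adjusted game tree in which the default (timeout) actions are free and every other move costs $\tau$, yielding the counter condition $\tau < x\,(1-\gamma)-\lambda\gamma$, the delivery condition $x-x'>\tau$, the buyer's condition $x>\tau$, and the persistence of the $\gamma<\tfrac{1}{2}$ threshold. You are in fact slightly more explicit than the paper, which merely asserts that $\gamma<\tfrac{1}{2}$ is ``not hard to see'' where you give the cancellation argument $(x+\lambda)(1-2\gamma)>0$, and which lists $x>\tau$ as a separate requirement where you observe it is already implied by the stated bound on $\tau$.
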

	\begin{proof}
		We proceed using backwards induction in the game tree. It is not hard to see we still need $\gamma<\frac{1}{2}$. Consider a seller faced with a dispute. When he is dishonest his incentive to be honest is increased by $\tau$, while the converse is true when he is honest. This yields the following inequality:
		\begin{equation}
		x\,(1-\gamma)-\lambda\gamma-x'-\tau > -x'
		\end{equation}
		Which solves to $\tau < x\,(1-\gamma)-\lambda\gamma$. Now consider a buyer. If she did receive the item, her incentive to accept has only increased by $\tau$. If she did not receive the item, her added cost of $\tau$ for issuing a dispute must outweight the size of the payment. This means we must have $x>\tau$. Finally, consider a seller deciding whether to send or not. If he does not send he incurs a cost of 0, while accepting gives $x-x'-\tau>0$.
	\end{proof}
	
	\begin{lemma}
		With transaction fees of size $\tau$, the contract is $\varepsilon$-sound only when $\varepsilon \leq x\,(1-2\gamma)-\tau$, and the transaction fee is bounded $\tau < x\,(1-2\gamma)$.
	\end{lemma}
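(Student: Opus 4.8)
The plan is to mirror the proof of the fee-free $\varepsilon$-soundness lemma, proceeding by backward induction on the game tree of \cref{fig:gametree}, and to insert the per-move cost $\tau$ into each incentive inequality according to the convention of the model: every non-default move costs $\tau$, while the timeout (default) move is free. As in the fee-free case, I expect it to suffice to examine the two nodes at which a seller is confronted with a dispute, since the buyer's accept-versus-dispute inequality is implied by these. These are the on-path node (the seller delivered and faces a false dispute), where the honest action \textsf{counter} is an active move and hence charged $\tau$, and the off-path node (the seller withheld the item and faces an honest dispute), where the honest action \textsf{forfeit} is the free default.

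First I would write down the two binding inequalities. Requiring the honest \textsf{counter} to beat the dishonest \textsf{forfeit} by at least $\varepsilon$ at the on-path node, with \textsf{counter} charged $\tau$, gives an upper bound of the form $\lambda \leq \tfrac{x(1-\gamma) - \tau - \varepsilon}{\gamma}$. Requiring the honest \textsf{forfeit} to beat the dishonest \textsf{counter} by at least $\varepsilon$ at the off-path node, with the deviating \textsf{counter} move charged, gives a matching lower bound on $\lambda$. I would then intersect the two bounds, eliminate $\lambda$ by demanding the interval be non-empty, and solve the resulting inequality for $\varepsilon$ to obtain $\varepsilon \leq x(1-2\gamma) - \tau$. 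The bound $\tau < x(1-2\gamma)$ then drops out for free, since $\varepsilon>0$ forces $x(1-2\gamma)-\tau>0$.

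The hard part will be the fee bookkeeping at the off-path seller node. Because the honest action there (\textsf{forfeit}) is the free default while the tempting deviation (\textsf{counter}) is the charged move, I must be careful about whether $\tau$ tightens or loosens the lower bound on $\lambda$; a sign slip here changes the final constant (for instance, between $x(1-2\gamma)-\tau$ and $(x-\tau)(1-2\gamma)$), so this is the step that deserves the most care. A secondary obligation is to confirm that the remaining decision nodes impose only weaker constraints on $\varepsilon$, so that the seller-dispute pair is genuinely binding: concretely, the seller's initial send-versus-withhold choice and the buyer's dispute-after-no-delivery choice, which I expect to reduce to the value assumptions $x - x' > \tau$ and $x > \tau$ already isolated in the completeness-with-fees lemma. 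Assuming those hold, the two seller-dispute inequalities dominate and the stated characterization follows.
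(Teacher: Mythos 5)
Your plan follows the paper's proof almost line for line: backwards induction, the two seller-dispute nodes giving an upper and a lower bound on $\lambda$, then eliminating $\lambda$ and solving for $\varepsilon$. Your upper bound $\lambda \leq \frac{x(1-\gamma)-\varepsilon-\tau}{\gamma}$ is exactly the paper's. The difficulty is the step you yourself flagged as the dangerous one: it does not come out the way you claim. With the bookkeeping you describe --- honest \textsf{forfeit} is the free default, the deviating \textsf{counter} is charged $\tau$ --- the off-path constraint is $x\gamma - \lambda\,(1-\gamma) - \tau \leq -\varepsilon$, which solves to $\lambda \geq \frac{x\gamma + \varepsilon - \tau}{1-\gamma}$. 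Intersecting this with the upper bound and eliminating $\lambda$ yields $\varepsilon \leq (x-\tau)(1-2\gamma)$, not $\varepsilon \leq x\,(1-2\gamma)-\tau$; the two differ by $2\gamma\tau$, and the former is strictly weaker for $\gamma,\tau>0$. So executing your plan faithfully produces the second of the two constants you named and therefore does not establish the lemma as stated.

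You should also know that the paper's own proof is internally inconsistent at precisely this point: it writes down the same inequality you would, $x\gamma - \lambda\,(1-\gamma) - \tau < -\varepsilon$, but then asserts that it gives $\lambda > \frac{x\gamma+\varepsilon+\tau}{1-\gamma}$, silently flipping the sign of $\tau$ (that bound would be correct only if the fee were charged to the free default \textsf{forfeit} rather than to \textsf{counter}). Only with the flipped sign does the intersection collapse to $\varepsilon+\tau \leq x\,(1-2\gamma)$, which is the lemma's claim. So your instinct about where the constant gets decided was exactly right, but under the model's stated fee convention the tight necessary condition is $\varepsilon \leq (x-\tau)(1-2\gamma)$, attained at $\lambda = x-2\tau$; the constant $x\,(1-2\gamma)-\tau$ is instead what falls out if one fixes $\lambda = x$ as in the theorem that follows the lemma, where the binding constraint is the honest seller's charged \textsf{counter}. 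Your secondary obligations (the buyer's node, and the conditions $x>\tau$, $x-x'>\tau$) are handled correctly and match the paper, which relegates them to the completeness lemma.
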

	\begin{proof}
		We proceed using backwards induction. Similar arguments as before gives us $\gamma<\frac{1}{2}$ and $\varepsilon \leq x\,(1-2\gamma)$. The bound on $\tau$ is established by considering a seller faced with a dispute. If he is honest we want him to counter which gives:
		\begin{equation}
		x\,(1-\gamma) - \lambda\gamma -x' - \tau > \varepsilon-x'
		\end{equation} 
		This solves to $\lambda < \frac{x\,(1-\gamma)-\varepsilon-\tau}{\gamma}$. Likewise, if he is dishonest we want him to forfeit which gives:
		\begin{equation}
		x\gamma - \lambda\,(1-\gamma) - \tau  < - \varepsilon
		\end{equation} 
		This gives $\lambda > \frac{x\gamma+\varepsilon+\tau}{1-\gamma}$. Combining these, we get:
		$$\frac{x\,(1-\gamma)-\varepsilon-\tau}{\gamma} > \frac{x\gamma+\varepsilon+\tau}{1-\gamma}$$
		which solves to $\tau < x\,(1-2\gamma)-\varepsilon$.
	\end{proof}
	
	\noindent With these two lemmas in place we can prove game-theoretic security using similar arguments as before. This allows us to establish the following:
	
	\begin{theorem}
		With transaction fees of size $\tau$, the contract has $[x\,(1-2\gamma)-\tau]$-strong game-theoretic security when $\lambda = x$ and $x-x'>\tau$.
	\end{theorem}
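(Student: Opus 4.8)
The plan is to combine the two preceding fee-aware lemmas in exactly the manner the fee-free strong-security theorem combined its completeness and soundness lemmas: $\varepsilon$-strong security is completeness together with $\varepsilon$-soundness, so I would instantiate $\lambda = x$ and the target parameter $\varepsilon = x\,(1-2\gamma) - \tau$, and then verify that the hypotheses of both lemmas hold at these values. Since the definition of strong security requires $\varepsilon > 0$, the statement implicitly carries $\tau < x\,(1-2\gamma)$, and I would record this at the outset, as it is the single quantity driving both halves of the argument.

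For completeness I would invoke the first fee lemma. With $\lambda = x$ the fee bound $\tau < x\,(1-\gamma) - \lambda\gamma$ simplifies to $\tau < x\,(1-2\gamma)$, which holds precisely because $\varepsilon = x\,(1-2\gamma) - \tau > 0$. The remaining hypotheses are the given $x - x' > \tau$ and the (implicit, always necessary) $\gamma < \tfrac{1}{2}$, so completeness follows directly.

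For soundness I would set $\varepsilon = x\,(1-2\gamma) - \tau$ and apply the second fee lemma. Its requirement $\varepsilon \leq x\,(1-2\gamma) - \tau$ is met with equality, and the accompanying bound $\tau < x\,(1-2\gamma)$ is exactly the positivity of $\varepsilon$ already noted. It then remains to check that $\lambda = x$ lies in the admissible interval
$$\frac{x\gamma + \varepsilon + \tau}{1-\gamma} \leq \lambda \leq \frac{x\,(1-\gamma) - \varepsilon - \tau}{\gamma};$$
substituting $\varepsilon = x\,(1-2\gamma) - \tau$ collapses both endpoints to exactly $x$, so $\lambda = x$ is the unique admissible wager and soundness is established.

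The only genuine subtlety, and the step I would treat most carefully, is this degeneracy: at $\varepsilon = x\,(1-2\gamma) - \tau$ the soundness interval for $\lambda$ shrinks to the single point $x$, mirroring the fee-free case where $\varepsilon = x\,(1-2\gamma)$ forced $\lambda = x$. The transaction fee merely shifts the soundness thresholds uniformly by $\tau$, so no new phenomenon arises. What one must respect is the strict-versus-weak inequality convention: completeness uses strict inequalities while $\varepsilon$-soundness uses $\geq$, so $\lambda = x$ can sit on the (closed) soundness boundary while remaining strictly interior to the completeness region, which is exactly what lets both conditions hold simultaneously.
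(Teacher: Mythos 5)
Your proposal is correct and follows exactly the route the paper intends: the paper gives no explicit proof, saying only that the theorem follows from the two fee-aware lemmas ``using similar arguments as before,'' and your argument is precisely that combination, with the arithmetic (both interval endpoints collapsing to $\lambda = x$ when $\varepsilon = x\,(1-2\gamma)-\tau$) checking out. Your handling of the closed-versus-open boundary convention, resolved via the paper's stated note on $\varepsilon$-soundness, is also the right reading and mirrors how the fee-free theorem already places $\lambda = x$ on the closed soundness boundary.
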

	
	\subsection{Liveness}
	
	The protocol as presented thus far does not ensure liveness. We remedied this by introducing timeouts, such that players default to choosing an action when they did not respond. For some applications this may not be satisfactory, in that we want the parties to respond as quickly as possible rather than waiting for the timeout. We will fix this using an idea of Asgaonkar and Krishnamachari \cite{buyer_seller_dilemma}. We require both parties to submit a deposit in order to accept the contract. If a party times out, it loses its deposit. This incentivizes the parties to answer within the timeout. However, playing a move at any time before the timeout is still an equilibrium. We can fix the problem by withholding an amount of the deposit proportional to time taken. However, this unfairly punishes honest parties which may incur a short delay for legitimate reasons. Instead, the protocol is parameterized by two timeouts $T_\text{threshold} < T_\text{timeout}$. $T_\text{threshold}$ is a ``reasonable'' time before which parties are not punished, and $T_\text{timeout}$ is the timeout where a default action is taken. Let $D$ be the size of the deposit. A party makes a decision at time $t$ is paid $\rho(t)$ defined by:
	$$\rho(t) = \begin{cases}D & \text{if $t \leq T_\text{threshold}$}\\D\left(1-\frac{t - T_\text{threshold}}{ T_\text{timeout} - T_\text{threshold}}\right) & \text{if $T_\text{threshold} < t < T_\text{timeout}$} \\ 0 & \text{o.w.}\end{cases}$$
	This ensures that parties have largest incentive to answer within reasonable time, and otherwise to answer as quickly as possible.
	
	\section{Multiparty transactions}
	In this section we consider generalizing the contract to more parties. Consider a \emph{multiparty transaction} with $n$ parties: each party may possibly transact with every other party (acting both as seller and buyer). Note that we could just invoke the two-party contract for each pair, however this would incur transaction fees $O(n^2)$ times which we would like to avoid. We present a contract that enables a set of $n$ parties to make any number pairwise transactions using $O(n)$ transaction fees, which is easily seen to be optimal.
	
	We denote by $\P_1, \ldots \P_j$ the $n$ players, and let $x_{ij}$ be the size of the payment from $\P_i$ to $\P_j$. We will essentially run the two-party protocol $n^2$ times in parallel, and compress the bookkeeping using a procedure similar to how you would split the bill at a restaurant, the algorithm is as follows:
	
	\begin{enumerate}
		\item $\P_i$ deposits $\sum_{j=1}^n x_{ij}$ to the contract.
		\item All items are delivered off-chain.
		\item $\P_i$ produces a \emph{dispute vector} $d_i$ where $d_{ij}=0$ if $\P_i$ does not dispute the item received from $\P_j$, and $d_{ij}=1$ otherwise.
		\item $\P_i$ deposits $\sum_{j=1}^n d_{ij} \lambda_{ij}$ to the contract, where $\lambda_{ij} = x_{ji}$. 
		\item $\P_{i}$ is given $\{d_{ji}\}_{j=1}^n$, and produces a \emph{counter vector} $c_i$ where $c_{ij}=1$ if $\P_i$ counters the dispute issued by $\P_j$, and $c_{ij}=0$ otherwise. 
		\item $\P_i$ deposits $\sum_{j=1}^n c_{ij} \lambda_{ji}$ to the contract.
		\item The contract samples a matrix $b \in_R \{0,1\}^{n\times n}$ uniformly at random, and sends $\sum_{j=1}^n(x_{ji} - b_{ij} c_{ij} d_{ij} \lambda_{ij})$ to $\P_i$.
	\end{enumerate}
	
	\noindent It is not hard to verify that each player has the correct balance afterwards. Security follows from the \emph{one-shot deviation principle}: all games are essentially independent so we may suppose they are played in some order. We now proceed using backwards induction: game-theoretic security implies the last game must be secure since there is no venue for punishing misbehavior. But then the same must be true for the second-to-last game, and so on. 
	
	Finally, the number of transaction fees is $O(n)$ since each player makes three deposits in the worst-case, and receives one withdrawal. It is not hard to see this is optimal since each player needs to participate in the contract. 
	
	\section{Conclusion}
	In this paper we proposed a smart contract for trading any physical good or service using a smart contract as escrow. The contract settles disputes by a wager between the buyer and seller, where the parties wager that they can convince an arbiter of their honesty. The contract was shown to be secure in a strong game-theoretic sense for certain values of the size of the wager, assuming the arbiter is biased in favor of honest parties. The contract can be shown secure in a weaker sense when the arbiter is replaced by a random coin toss. Finally, we showed how to generalize the contract to multiple parties in a way that amortizes the transaction fees.
	
	\paragraph{Future work} In order for the contract to be used in practice, there is need for an implementation in e.g. Solidity and formally prove its correctness.
	
	\bibliographystyle{plain}
	\bibliography{paper}
\end{document}